\documentclass[12pt]{article}
\usepackage{graphicx}
\usepackage{mdframed}	
\usepackage{amsmath}	
\usepackage{scrextend}
\usepackage{amsthm}
\usepackage{amssymb}
\usepackage{pgfgantt}
\usepackage{algorithm}
\usepackage{algpseudocode}
\usepackage{hyperref}
\usepackage[most]{tcolorbox}
\usepackage{multirow}
\usepackage{multicol}
\usepackage{authblk}
\usepackage[appendix=inline,bibliography=common]{apxproof}

\newtheorem{thm}[equation]{Theorem}
\newtheorem{cor}[equation]{Corollary}
\newtheorem{lem}[equation]{Lemma}

\newtheorem{lp}[equation]{Linear Program}
\newtheorem{defn}[equation]{Definition}

\newtheoremrep{thm}[equation]{Theorem}
\newtheoremrep{lem}[equation]{Lemma}

\theoremstyle{definition}

\theoremstyle{remark}

\begin{document}
\title{The parameterised complexity of generalised temporal domination on temporal graphs with modular structure}

\author[1]{Jessica Enright}
\author[1]{Kitty Meeks}
\author[1]{Elena Moss}
\affil[1]{\small{School of Computing Science, University of Glasgow, Sir Alwyn Williams
Building, Glasgow, G12 8RZ, Scotland, UK}}

\date{}

\maketitle   

\begin{abstract}
Inspired by the static problem $(\alpha,\beta)$-\textsc{Dominating Set}, we propose a general temporal domination problem, called $(\alpha,\beta)$-\textsc{Temporal Dominating Set}. We show that this problem encompasses \textsc{Temporal Dominating Set}, and additionally provides first temporal extensions of problems such as $k$-\textsc{Dominating Set} and $\alpha$-\textsc{Dominating Set}. In this paper, we study the parameterised complexity of $(\alpha,\beta)$-\textsc{Temporal Dominating Set} with respect to \emph{temporal neighbourhood diversity} (TND), \emph{temporal modular-width} (TMW), and \emph{temporal cliquewidth} (TCW). We obtain fixed parameter tractability results for all values of $\alpha$ and $\beta$ with respect to TND; W[1]-hardness with respect to TMW and TCW whenever $\beta$ is in the problem input, or whenever $\alpha\in (0,1)$ and $\beta$ is a fixed constant; and para-NP-hardness with respect to TCW when $\alpha = 0$ and $\beta = 1$, or $\alpha = 1$ and $\beta = 0$. \\

\noindent\small{\textbf{Keywords: }Temporal Graphs, Parameterised Algorithms, FPT, Dominating Set}
\end{abstract}

\section{Introduction}

Temporal graphs are useful tools to study real-world time-varying networks, such as interpersonal relationships, disease spread, and traffic routing. The volume of research on temporal graph problems has greatly expanded in recent years \cite{casteigtsMeeksMertzios2021,FluschnikMolterNiedermeier2020,holmeSaram2012,michailTemporalGraphs}. Unfortunately, the translation of static graph problems to the temporal world is not straightforward, and many problems which are efficiently solvable on static graphs become intractable on these temporal objects \cite{BumpusMeeks2023,CioniDondiMarinoSchSilva2025,HaagMolterNiedermeierRenken2022}.

Many static graph problems admit multiple temporal versions that capture slightly different properties of interest; here we consider temporal versions of \emph{domination problems}. The classic NP-complete problem \textsc{Dominating Set} (\textsc{ds}) looks for a small set of vertices such that every vertex in a static graph is either in the set, or adjacent to a vertex in the set. There are multiple ways to define this in the temporal setting; see, for example, \cite{casteigtsJourneyDynamicNetworks,casteigtsFlocchini2013,kutnerLJ2024}. We use the definition given by Herrmann \emph{et al.} in \cite{HerrmannKomMora2025}. For convenience, say that a \emph{snapshot} is the static graph induced by all edges which are present at a specific time, and that the \emph{underlying graph} is the static graph induced by edges which are present for at least one snapshot. A \textit{temporal dominating set} is a subset $D$ of vertex-time pairs (called \emph{temporal vertices}) where, for every vertex $u$ in the underlying graph, there exists $(v,t)\in D$ such that $u$ is dominated by $v$ in the snapshot at time $t$. The problem \textsc{Temporal Dominating Set} (\textsc{tds}) is the corresponding decision problem of whether or not there exists such a set of some specified cardinality. 

Lafond and Luo \cite{LafondLuo2023} have recently introduced the problem $(\alpha,\beta)$-\textsc{Dominating Set}, which acts as a natural generalisation of other well-studied domination problems on static graphs. Formally, consider a rational\footnote{The problem is defined for \emph{real numbers} $\alpha\in [0,1]$ in \cite{LafondLuo2023}, however we are primarily interested in the complexity of computation and thus restrict our attention to rational values.} number $\alpha\in [0,1]$, and let $\beta$ be an integer (which can be positive or negative). The problem is defined in \cite{LafondLuo2023} as follows: 

\begin{center}
\begin{tcolorbox}[colback=gray!5!white,colframe=gray!75!black,width=0.975\linewidth]
\noindent\textsc{$(\alpha,\beta)$-Dominating Set} (\textsc{$(\alpha,\beta)$-ds})\\
\textbf{Input:} A graph $G = (V,E)$ and non-negative integer $k$\\
\textbf{Question:} Is there a subset $D\subseteq V$ of size at most $k$ such that, for every $v\in V\setminus D$, $|N(v)\cap D| \geq \alpha |N(v)| + \beta$?
\end{tcolorbox}
\end{center}

For example, when $\alpha = 0$ and $\beta \geq 1$, this is the $k$-\textsc{Dominating Set} problem \cite{chellali2012}; when $\beta=1$, this is the classical version of \textsc{Dominating Set}. When $\beta = 0$ and $\alpha \in (0,1]$, this is the $\alpha$-\textsc{Dominating Set} problem \cite{DunbarHoffman2000}; when $\alpha = 1$ and $\beta \leq 0$, this is the \textsc{Bounded Degree Deletion} problem \cite{ganianKluteOrd2021}, which includes \textsc{Vertex Cover} when $\beta = 0$. 

Because the definition of $(\alpha,\beta)$-\textsc{ds} includes \textsc{ds}, the NP-hardness of the problem is directly inherited. This motivates their appeal in \cite{LafondLuo2023} to \emph{parameterised complexity} \cite{cygan2015,downeyFundamentalsParameterizedComplexity2013}, in which the aim is to find algorithms where the non-polynomial portion of the runtime is restricted to be only in terms of a specific aspect, or \emph{parameter}, of the problem instance. Lafond and Luo study the parameterised complexity of $(\alpha,\beta)$-\textsc{ds} with respect to a range of parameters, specifically those which capture structural aspects of the problem. Among others, this includes \emph{neighbourhood diversity}, \emph{modular-width} and \emph{cliquewidth}. Moreover, they further delineate the problem based on whether or not $\beta$ is given as a part of the input, or a fixed constant. In the former, they show $(\alpha,\beta)$-\textsc{ds} is $W[1]$-hard with respect to modular-width for all $\alpha\in[0,1]$; in the latter, the same complexity follows when $\alpha\in (0,1)$ (see \cite[Corollary 2]{LafondLuo2023}).

To our knowledge, no \emph{temporal} version of this problem has been introduced. We describe a temporal variant of $(\alpha,\beta)$-\textsc{ds} which aims to preserve the spirit and generality of the static version; in particular, we ensure that it properly generalises \textsc{Temporal Dominating Set}. In $(\alpha,\beta)$-\textsc{Temporal Dominating Set} ($(\alpha,\beta)$-\textsc{tds}), we search for a set $D$ of vertex-time pairs such that all vertices which \emph{never} appear in $D$ are adjacent to at least $\alpha|N_{\downarrow}(v)| + \beta$ vertices which appear in $D$ for at least one time; here, $|N_{\downarrow}(v)|$ denotes the number of neighbours of $v$ in the underlying graph. For a formal problem definition, see Section \ref{subsec:formal-prob-defn}.

We provide first parameterised complexity results for $(\alpha,\beta)$-\textsc{tds}, specifically with respect to the temporal extensions of \emph{neighbourhood diversity} ($nd$), \emph{modular-width} ($mw$), and \emph{cliquewidth} ($cw$) introduced by Enright \emph{et al.} in \cite{EnrightHandLJMeeks2024}. The static counterparts were first defined by Lampis \cite{lampis2012}, Gajarsky \textit{et al.} \cite{gajarsky2013}, and Courcelle \textit{et al.} \cite{courcelle1993}, respectively, and have become key structural graph parameters in complexity theory \cite{Ganian2012,hegerfeldKratsch2023,LafondLuo2023}. Intuitively, the link between these parameters is that they each describe a way to group together vertices depending on the similarity of their neighbourhoods. Cliquewdith generalises modular-width, which further generalises neighbourhood diversity.

The temporal versions of these parameters admit an analogous hierarchy. Beginning with the most restrictive, \emph{temporal neighbourhood diversity} (TND) minimises the number of classes needed in a partition of the vertices, such that all vertices in a \emph{block} of the partition have the same neighbourhood at every time. Both \emph{temporal modular-width} (TMW) and \emph{temporal clique-width} (TCW) provide a more general interpretation of this ``block'' structure. TMW relaxes the requirement that edges within a block (in this context known as \emph{modules}) must be \emph{all} present or \emph{all} absent in a given snapshot; TCW gives a further abstraction and is defined as the minimum number of \emph{labels} (corresponding to specific operations) needed to construct a temporal graph. For any temporal graph, we have 
\begin{equation}
\label{ineq:param-hierarchy}
    \text{TND} \geq \text{TMW}\geq \text{TCW}.
\end{equation}An important aspect of this hierarchy is that fpt-results for more general parameters imply such results ``up the chain'': for example, an fpt-algorithm with respect to either TMW or TCW for a problem would imply such a result with respect to TND. For formal definitions of these parameters, see Section \ref{subsec:param-defn}.

As was done for $(\alpha,\beta)$-\textsc{ds} in \cite{LafondLuo2023}, we separately analyse the cases of $(\alpha,\beta)$-\textsc{tds} based on whether $\beta$ is part of the input, or a fixed constant. In Section \ref{sec:fpt-a-b}, we use a novel linear program to show that in either case, and for any choice of $\alpha$, $(\alpha,\beta)$-\textsc{tds} is fixed parameter tractable with respect to TND (see Theorem \ref{thm:tds-fpt-tnd}). In Section \ref{sec:parm-hardness-a-b}, we show the inheritance of certain W[1]-hardness results for $(\alpha,\beta)$-\textsc{tds} from the static problem with respect to TCW and TMW, both whenever $\beta$ is in the input, and when $\beta$ is any constant but $\alpha\in (0,1)$ (see Theorem \ref{thm:tmw-w1-hard}). Finally, we show para-NP-hardness with respect to TCW when $\alpha = 0$ and $\beta=1$, as well as when $\alpha = 1$ and $\beta = 0$ (see Theorem \ref{thm:a-b-p-np-hard}). A summary of these results is presented in Table \ref{tab:resultsTable}.

\begin{figure}[h]
\begin{center}
\begin{tabular}{ |p{2.5cm}||p{1.75cm}|p{1.75cm}|p{1.75cm}||p{1.85cm}|  }
 \hline
 \multicolumn{5}{|c|}{\textbf{Parameterised Complexity Results for $(\alpha,\beta)$-\textsc{tds}}} \\
 \hline
 & $\alpha = 0$ & $\alpha\in (0,1)$ & $\alpha = 1$ & Parameter\\
 \hline
 \multirow{3}{9em}{$\beta$ in the input} &$W[1]$-h &$W[1]$-h &$W[1]$-h & TCW\\
 &$W[1]$-h &$W[1]$-h &$W[1]$-h & TMW\\
 &FPT &FPT &FPT & TND\\
 \hline
  \multirow{3}{9em}{$\beta$ any positive constant} &\textbf{open}* &$W[1]$-h &poly. & TCW\\
 &\textbf{open} &$W[1]$-h &poly. & TMW\\
 &FPT &FPT &poly. & TND\\
 \hline
  \multirow{3}{9em}{$\beta=0$} &poly. &$W[1]$-h &p-NP-h & TCW\\
 &poly. &$W[1]$-h &\textbf{open} & TMW\\
 &poly. &FPT &FPT & TND\\
 \hline
 \multirow{3}{9em}{$\beta$ any negative constant} &poly. &$W[1]$-h &\textbf{open} & TCW\\
 &poly. &$W[1]$-h &\textbf{open} & TMW\\
 &poly. &FPT &FPT & TND\\
 \hline
\end{tabular}
\end{center}
\caption{A summary table of parameterised complexity results obtained for $(\alpha,\beta)$-\textsc{tds}. If $\alpha =1$ and $\beta \geq 1$, then all instances are \emph{no}-instances (unless we are in the degenerate case where $k$ is at least the number of vertices); if $\alpha = 0$ and $\beta \leq 0$, then all instances are \emph{yes}-instances. These cases are hence polynomial-time solvable. *See Theorem \ref{thm:a-b-p-np-hard} for para-NP-hardness in the case when $\beta = 1$.}
\label{tab:resultsTable}
\end{figure}

The key FPT results of this paper answer an open question posed by Herrmann \emph{et al.} in \cite{HerrmannKomMora2025} regarding the fixed parameter tractability of \textsc{tds} with respect to TND. Moreover, the definition of $(\alpha,\beta)$-\textsc{tds} also provides the first temporal translations of problems such as $k$-\textsc{Dominating Set} and \textsc{Bounded Degree Deletion}. As a tool, in Section \ref{sec:ss-probs}, we define a class of problems for which the order and multiplicity of snapshots do not matter; calling them \emph{snapshot-set problems}. We show that $(\alpha,\beta)$-\textsc{tds} is such a problem, and believe that the techniques we use here for snapshot-set problems will be useful for other problems with this property. 

The remainder of this paper is organised as follows. In Section \ref{sec:notation}, we first highlight our notation, before giving formal definitions of the $(\alpha,\beta)$-\textsc{tds} problem in Section \ref{subsec:formal-prob-defn}, and of our considered parameters in Section \ref{subsec:formal-prob-defn}. We give an overview of snapshot-set problems in Section \ref{sec:ss-probs}, before finally, in Sections \ref{sec:fpt-a-b} and \ref{sec:parm-hardness-a-b}, we show parameterised tractability and intractability results for $(\alpha,\beta)$-\textsc{tds}, respectively. We conclude with a summary and open problems in Section \ref{sec:conc}. 

\section{Notation and preliminaries}\label{sec:notation}

Throughout this paper, we use $[N]$ as shorthand for the set $\{1,\dots, N\}$.

A \emph{temporal graph} $\mathcal{G} = (G,\lambda)$ is a static graph $G = (V,E)$ together with a mapping $\lambda: E\to 2^{\mathbb{N}}$ which assigns a set of \emph{active times} to each edge of $G$. The graph $G = (V,E)$ is called the \emph{underlying graph} of $\mathcal{G}$, which we denote by $G_{\downarrow}$. The maximum time for which an edge of $\mathcal{G}$ is active is called the \emph{lifetime}, which we denote by $\Lambda$. The temporal graph $\mathcal{G}$ can also be expressed by a sequence of \emph{snapshots}, each of which is a static graph $G_t$, $t\in [\Lambda]$, such that $G_t = (V, E_t)$, where $E_t$ are the edges active at time $t$; we refer to the pair $(e,t)\in E\times \lambda(e)$ as a \emph{time-edge}. The set of all time-edges in the temporal graph is denoted $\mathcal{E}(\mathcal{G})$. The \emph{temporal vertex set} of $\mathcal{G}$ is the set of vertex-time pairs $V\times [\Lambda]$, where $(v,t)\in V\times [\Lambda]$ refers to the instance of vertex $v$ in snapshot $G_t$. The closed (open) neighbourhood of temporal vertex $(v,t)$ is defined by the closed (open) neighbourhood of $v\in V$ in snapshot $G_t$; we write this $N_{G_t}[v]$ ($N_{G_t}(v)$). Similarly, $N_{\downarrow}[v]$ ($N_{\downarrow}(v)$) is the closed (open) neighbourhood in $G_{\downarrow}$. For the remainder of this paper, we assume that the underlying graph is always connected. In the context of considered problems, this is without loss of generality, since our problems can always be solved on each of the components.

Finally, we define an instance $x$ of a temporal graph problem as a pair $(\mathcal{G},\Gamma)$, where $\mathcal{G}$ is a temporal graph, and $\Gamma$ is a string encoding the rest of the problem instance. Often, $\Gamma$ will often simply be a string of numbers.

\subsection{$(\alpha,\beta)$-\textsc{Temporal Dominating Set}}\label{subsec:formal-prob-defn}

Consider a temporal graph $\mathcal{G}$ with lifetime $\Lambda$ and vertex set $V$. Let $D\subseteq V\times [\Lambda]$ be a subset of temporal vertices. Then, for each vertex $v\in V$ such that for all $t\in [\Lambda]$, $(v,t)\not\in D$, define the \emph{restriction of $D$ at $v$} by $$D(v) := \{u\in V: \exists t\in \{1,\dots, \Lambda\}: (u,t)\in D \text{ and } t\in \lambda(uv)\}.$$Intuitively, $D(v)$ is a subset of $N_{\downarrow}(v)$; however containment can be strict. $D(v)$ aims to capture only vertices $u\in V$ which are both in the underlying neighbourhood of $v$, \emph{and} in $D$ at some time $t$ such that $v\in N_{t}(u)$. We use the restriction to help us define the following general temporal dominating set condition. 

\begin{defn}
    Let $\alpha\in [0,1]\cap \mathbb{Q}$, and $\beta\in\mathbb{Z}$. Say that a set $D\subseteq V\times [\Lambda]$ is an $(\alpha,\beta)$-\emph{temporal dominating set} for $\mathcal{G}$ if, for all $v\in V$ such that $(v,t)\not\in D$, for all $t\in[\Lambda]$, the following inequality holds:$$\left|D(v)\right| \geq \alpha |N_{\downarrow}(v)| + \beta.$$We call this the \emph{$(\alpha,\beta)$-temporal dominating condition}. 
\end{defn}

The corresponding decision problem can be stated as follows: 

\begin{center}
\begin{tcolorbox}[colback=gray!5!white,colframe=gray!75!black,width=0.975\linewidth]
\noindent\textsc{$(\alpha,\beta)$-Temporal Dominating Set} (\textsc{$(\alpha,\beta)$-tds})\\
\textbf{Input:} A temporal graph $\mathcal{G}$ with vertex set $V$ and $k\in\mathbb{Z}_{>0}$\\
\textbf{Question:} Is there a subset $D\subseteq V\times [\Lambda]$ of size at most $k$ which is an $(\alpha,\beta)$-temporal dominating set for $\mathcal{G}$?
\end{tcolorbox}
\end{center}

Notably, this definition generalises \textsc{tds}.

\begin{lemrep}
\label{obs:0-1-is-tds}$(0,1)$-\textsc{tds} is \textsc{Temporal Dominating Set}. 
\end{lemrep}

\begin{proof}
In this case, we require that for all $v\in V$ such that $(v,t)\not\in D$ for all $t\in[\Lambda]$, $|D(v)| \geq 1$. By definition of $D(v)$, for arbitrary vertex $u\in D(v)$, there is some $(u,t)\in D$ which temporally dominates $v$. Moreover, for any vertex $v'$ for which $D(v')$ is not defined, we know there is some time $t'\in[\Lambda]$ such that $(v',t')\in D$. Thus, $v\in V(G_{\downarrow})$ is temporally dominated by itself.
  \end{proof}

The NP-hardness of $(\alpha,\beta)$-\textsc{tds} is an immediate inherited property of \textsc{tds}. We next show that we do not obtain a symmetric equivalence between $(1,0)$-\textsc{tds} and \textsc{Temporal Vertex Cover} (\textsc{tvc}), which asks for a subset $C\subseteq V\times [\Lambda]$ such that every edge in the underlying graph is incident to some $(v,t)\in C$ \cite{akridaTemporalVertexCover2020,HerrmannKomMora2025}.

\begin{lemrep}Every \emph{yes}-instance of \textsc{tvc} is a \emph{yes}-instance of $(1,0)$-\textsc{tds} with the same certificate. However, there exist \emph{yes}-instances of $(1,0)$-\textsc{tds} which are \emph{no}-instances of \textsc{tvc}.
\end{lemrep}
\begin{proof}
Let $D$ be a certificate of $x=(\mathcal{G},k)$ as a \emph{yes}-instance of \textsc{tvc}. If $v\in V$ is disjoint from all temporal vertices in $D$, then we know \emph{all} of its neighbours are in $D$ \emph{at some timestep}, thus $|D(v)|\geq |N_{\downarrow}(v)|$. Since these are the only vertices for which we must check the $(\alpha,\beta)$-temporal dominating condition, we find that $D$ is also a certificate for $x$ as a \emph{yes}-instance of $(1,0)$-\textsc{tds}.

To see that the converse statement does not hold, take, for example, the temporal graph $\mathcal{G}$ on vertex set $V = \{a,b,c\}$ with underlying edgeset $E = \{ab,bc,ca\}$, such that $ab$ is active at time $1$, $bc$ at time $2$ and $ca$ at time $3$. The set $D = \{(b,1), (c,3)\}$ is a certificate that $x=(\mathcal{G},2)$ is a \emph{yes}-instance of $(1,0)$-\textsc{tds}, however this is not a temporal vertex cover.
  \end{proof}

Moreover, the following is a nice property of the $(\alpha,\beta)$-\textsc{tds}. This will be applied in the parameterised hardness proofs of Section \ref{sec:parm-hardness-a-b}. 

\begin{lemrep} 
\label{lem:a-b-property-lemma}Let $x = (\mathcal{G},\Gamma)$ be an arbitrary instance of $(\alpha,\beta)$-\textsc{tds}, where $\alpha,\beta$ and $k$ are encoded in $\Gamma$. If $\mathcal{G}$ has lifetime $\Lambda = 1$, then $x$ is a \emph{yes}-instance of $(\alpha,\beta)$-\textsc{tds} if and only if $\tilde{x} = (G_1,\Gamma)$ is a \emph{yes}-instance of $(\alpha,\beta)$-\textsc{ds}.  
\end{lemrep}

\begin{proof}
    In this setting, the single snapshot $G_1$ is equal to the underlying graph $G_{\downarrow}$. So, for all vertices $v\in V$, $N_{{G}_{\downarrow}}(v) = N_{G_1}(v)$. Consider a subset $D\subseteq V\times \{1\}$, and define $\tilde{D} := \{v\in V: (v,1)\in D\}$. Then $|D| = |\tilde{D}|$, and, whenever there is a vertex $v\in V$ such $(v,1)\not\in D$, we also have $v\not\in \tilde{D}$, by construction. We claim that $D$ is a certificate of $x$ as a \emph{yes}-instance of $(\alpha,\beta)$-\textsc{tds} if and only if $\tilde{D}$ is such for $\tilde{x}$ of $(\alpha,\beta)$-\textsc{ds}. Consider any vertex $v\in V$ such that $(v,1)\not\in D$. Then, $D(v)$ is well-defined, and: 
    \begin{align*}
        w\in D(v) &\iff \exists t: (w,t)\in D, t\in \lambda(wv)\\
        &\iff (w,1)\in D, v\in N_{G_1}(v)\\
        &\iff w\in \tilde{D}, v\in N_{G_1}(v)\\
        &\iff w\in N_{G_1}(v)\cap \tilde{D}.
    \end{align*}
    Since $w\in D(v)$ is arbitrary, it follows that $|N_{G_1}(v)\cap \tilde{D}| = |D(v)|$, and hence $|N_{G_1}(v)\cap \tilde{D}|\geq \alpha|N_{G_1}(v)|+\beta$ if and only if $|D(v)| \geq \alpha|N_{\downarrow}(v)| + \beta$. Thus, both $D$ and $\tilde{D}$ certify their respective instances.
  \end{proof}

\subsection{Formal parameter definitions}\label{subsec:param-defn}

We are interested in three temporal parameters: \emph{temporal neighbourhood diversity} (\emph{TND}), \emph{temporal modular-width} (\emph{TMW}), and \emph{temporal cliquewidth} (\emph{TCW}). Here, we provide their formal definitions. We begin with the most restrictive, which is the focus of Section \ref{sec:fpt-a-b}.

\begin{defn}{\cite[Definition 8]{EnrightHandLJMeeks2024}}
    Consider a temporal graph $\mathcal{G} = (G,\lambda)$ with lifetime $\Lambda$. We say that two vertices $v,u \in V$ have the same \emph{temporal type} if, for all times $t\in \{1,\dots,\Lambda\}$, $$N_{G_t}[v]\setminus\{u,v\} = N_{G_t}[u]\setminus\{u,v\}.$$ A \emph{temporal neighbourhood partition} of $\mathcal{G}$ is a partition $\Theta = \{B_1, \dots, B_{\phi}\}\subseteq 2^{V}$ of the vertex set of the underlying graph $G = (V,E)$ such that any two vertices in the same part have the same temporal type. We call the parts $B_1, \dots, B_{\phi}$ \emph{blocks}. The \emph{temporal neighbourhood diversity} (\emph{TND}) of $\mathcal{G}$ is the smallest $\phi\in\mathbb{N}$ for which a temporal neighbourhood partition into $\phi$ blocks exists. 
\end{defn}

As observed by Enright \textit{et al.} in \cite[Lemma 9]{EnrightHandLJMeeks2024}, in each snapshot of $\mathcal{G}$, all blocks of a temporal neighbourhood partition induce subgraphs which are either independent sets or cliques. Between each block in a snapshot, the edges are either \emph{all present} (and hence the blocks are \emph{complete to each other}), or \emph{all absent} (blocks are independent of each other). 

In Section \ref{sec:parm-hardness-a-b}, we look to the parameters temporal modular-width and temporal cliquewidth, as defined by Enright \emph{et al.} in \cite{EnrightHandLJMeeks2024}. Both TMW and TCW provide a more general interpretation of the ``block'' structure imposed by a temporal neighbourhood partition. TMW generalizes TND by relaxing the requirement that each block induces a clique or an independent set, in every snapshot.

\begin{defn}{\cite[Definition 6]{EnrightHandLJMeeks2024}}
\label{defn:TMW}
    Suppose a temporal graph $\mathcal{G} = ((V,E),\lambda)$ of lifetime $\Lambda$ can be constructed by an algebraic expression $A$ which uses the following operations: 
    \begin{enumerate}
        \item[T1] Creating an isolated vertex 
        \item[T2] The substitution of temporal graphs $\mathcal{G}_1, \dots, \mathcal{G}_n$ (\emph{modules}) into a temporal graph $\mathcal{G}'$ with vertices $v_1, \dots, v_n$, giving the graph $\mathcal{G}'(\mathcal{G}_1,\dots, \mathcal{G}_n)$ with vertex set $\bigcup_{i\in [n]} V(\mathcal{G}_i)$ and time-edge set $\bigcup_{i\in [n]}\mathcal{E}(\mathcal{G}_i) \cup \{(vw,t): v\in V(\mathcal{G}_i), w\in V(\mathcal{G}_j), (v_iv_j,t)\in \mathcal{E}(\mathcal{G}')\}.$
    \end{enumerate}
    The width of an expression $A$ is the maximum number of operands in an occurrence of T2 in $A$. The \emph{temporal modular width} (\emph{TMW}) of $\mathcal{G}$ is the minimum width of an expression $A$ which constructs $\mathcal{G}$; denote this by $\mu(\mathcal{G})$, or simply $\mu$, when the context is clear.
\end{defn}

Finally, temporal clique-width gives a further level of abstraction and is defined as the minimum number of \emph{labels} required to construct a temporal graph. Such labels correspond to the operations: create a vertex with a new label, take the disjoint union of two graphs, take the complete join of vertices between specified labels, and relabel all vertices of one label to another. 

\section{Snapshot-set problems}
\label{sec:ss-probs}

Let $\mathcal{G} = (G,\lambda)$ be a temporal graph with lifetime $\Lambda$. Define the \emph{set of unique snapshots} of $\mathcal{G}$ as $$\mathcal{S}(\mathcal{G}) := \{G_1, \dots, G_{\tilde{\Lambda}}\},$$where, whenever $G_i,G_j\in \mathcal{S}(\mathcal{G})$ and $i\neq j$, $G_i\neq G_j$, and, for all snapshots $G_t$, $t\in [\Lambda]$, there exists some $G_l\in\mathcal{S}(\mathcal{G})$ such that $G_t = G_l$. Importantly, this is \emph{not a multiset} and hence we may have $\tilde{\Lambda} < \Lambda$, when snapshots of $\mathcal{G}$ are repeated. Moreover, we will assume that, for each set of snapshots $\{G_{i_1},\dots, G_{i_l}\}$ such that $G_{i_1} = \dots = G_{i_l}$, the \emph{representative snapshot} which we include in $\mathcal{S}(\mathcal{G})$ is $G_{i_1}$: that which minimises the index. Define the mapping $\tau_{\mathcal{G}}$ from $[\Lambda]$ to the set of representative snapshots' times. In particular, $\tau_{\mathcal{G}}^{-1}(t)$ is the set of all times for which the representative snapshot is $G_t$. 

For temporal graphs $\mathcal{G}$ and $\mathcal{G}'$, say that $\mathcal{S}(\mathcal{G}) = \mathcal{S}(\mathcal{G}')$ if $G_{\downarrow} = G'_{\downarrow}$, and there exists a bijection $\sigma$ between the times of representative snapshots in $\mathcal{S}(\mathcal{G})$ and $\mathcal{S}(\mathcal{G}')$, such that $\sigma(t) = t'$ if and only if $G_t = G_{t'}$, for $G_t\in\mathcal{S}(\mathcal{G})$ and $G_{t'}\in\mathcal{S}(\mathcal{G}')$. The following definition describes a family of problems for which it is sufficient to exclusively look at the set of unique snapshots. 

\begin{defn}
\label{def:snapshot-unique}
    Let $P$ be a temporal graph problem, and consider any instances $x = (\mathcal{G},\Gamma)$ and $x' = (\mathcal{G}',\Gamma)$ such that $\mathcal{S}(\mathcal{G}) = \mathcal{S}(\mathcal{G}')$. Say $P$ a \emph{snapshot-set problem} if we have that $x$ is a \emph{yes}-instance of $P$ if and only if $x'$ is a \emph{yes}-instance of $P$.
\end{defn}

Such problems have some nice properties. 

\begin{lemrep} 
\label{lem:lifetime_bound}Let $P$ be a snapshot-set problem, $x = (\mathcal{G},\Gamma)$ be an instance such that $\mathcal{G}$ is a temporal graph with lifetime $\Lambda$ and TND $\phi$. Then $|\mathcal{S}(\mathcal{G})|\leq 2^{\phi^2}.$
\end{lemrep}

\begin{proof}
    Due to the strict temporal edge structure imposed by a minimum size temporal neighbourhood partition $\Theta$, the number of \emph{possible} unique snapshots can be found by considering the number of possible edge connections both \emph{between} and \emph{within} blocks. This then upper-bounds the size of $\mathcal{S}(\mathcal{G})$. Between the $\phi$ blocks, there are ${\phi \choose 2}$ possible sets of edges present, yielding $2^{\phi\choose 2}$ unique snapshots based on these edge sets alone. Moreover, each block can be either a clique or an independent set at each timestep, giving $2^{\phi}$ possible snapshots. Compounding these, we find that the number of unique snapshots is $2^{{\phi \choose 2}+\phi}\leq 2^{\phi^2}$, which is the desired bound. 
  \end{proof}

\begin{lemrep} 
\label{lem:equiv-instance-ss-problem}Let $x = (\mathcal{G},\Gamma)$ be an instance of snapshot-set problem $P$, where $\mathcal{G}$ has lifetime $\Lambda$, TND $\phi$, and $n$ vertices. In time $O\left(2^{\phi^2}\cdot n^2\cdot \Lambda\right)$ we can compute an instance $\tilde{x} = (\tilde{\mathcal{G}},\Gamma)$, called the \emph{reduced instance}, where $\tilde{\mathcal{G}}$ has lifetime $\tilde{\Lambda} := |\mathcal{S}(\mathcal{G})|$, such that $x$ is a \emph{yes}-instance of $P$ if and only if $\tilde{x}$ is a \emph{yes}-instance of $P$.
\end{lemrep}

\begin{proof}
    To compute the instance $\tilde{x}$, we can store the $2^{\phi^2}$ possible snapshots (Lemma \ref{lem:lifetime_bound}) in a list. We then iterate through all $\Lambda$ snapshots in $\mathcal{G}$. To check if a given snapshot is equal to one already in the list takes time at most $n^2$. Putting this together yields a runtime of $$O\left(2^{\phi^2}\cdot n^2\cdot \Lambda\right).$$Observe that $\mathcal{S}(\mathcal{G}) = \mathcal{S}(\tilde{\mathcal{G}})$. Hence, since $P$ is a snapshot-set problem, $x$ is a \emph{yes}-instance of $P$ if and only if $\tilde{x}$ is a \emph{yes}-instance of $P$.
  \end{proof}

We finally show that $(\alpha,\beta)$-\textsc{tds} is a snapshot-set problem. In the next section, we will leverage this fact to apply Lemmas \ref{lem:lifetime_bound} and \ref{lem:equiv-instance-ss-problem}.

\begin{lemrep} 
\label{lem:a-b-tds-snapshot-set}$(\alpha,\beta)$-\textsc{tds} is a snapshot-set problem. 
\end{lemrep}

\begin{proof}
    Consider any two instances $x = (\mathcal{G} = (G,\lambda),\Gamma)$ and $\tilde{x} = (\tilde{\mathcal{G}}=(G,\tilde{\lambda}),\Gamma)$, such that $\mathcal{S}(\mathcal{G}) = \mathcal{S}(\tilde{\mathcal{G}})$ (this implies $V(\mathcal{G}) = V(\tilde{\mathcal{G}})=:V$), $k$ is encoded in $\Gamma$, and suppose without loss of generality that $\tilde{\mathcal{G}}$ has lifetime $\tilde{\Lambda} = |\mathcal{S}(\tilde{\mathcal{G}})|$. Recall that $\sigma$ is the bijection between the times of the representative snapshots in $\mathcal{S}(\mathcal{G})$ and $\mathcal{S}(\tilde{\mathcal{G}})$.
    
    Now, let $D$ be a certificate for $x$ as a \emph{yes}-instance. For each $(u,t)\in D$, recall that $(u,\tau_{\mathcal{G}}(t))$ is the corresponding temporal vertex in the \emph{representative snapshot}. Using this, we construct the set $\tilde{D}$ such that, for all $(u,t)\in D$, we add $(u,\sigma(\tau_{\mathcal{G}}(t)))$ to $\tilde{D}$. We claim that $\tilde{D}$ is a \emph{yes}-certificate for $\tilde{x}$. 

    Consider an arbitrary vertex $v\in V$ for which $(v,t)\not\in D$, for every $t\in [\Lambda]$. Since the set of representative times is a subset of $[\Lambda]$, this in turn means that $(v,\sigma(\tau_{\mathcal{G}}(t)))\not\in \tilde{D}$, by construction, and hence $\tilde{D}(v)$ is well-defined. Observe the following:
    \begin{align*}
        \forall w\in D(v) &\implies \exists \sigma^{-1}(t): (w,\sigma^{-1}(t))\in D: \sigma^{-1}(t)\in \lambda(wv)\\
        &\implies \exists t: (w,t)\in \tilde{D}:t\in\tilde{\lambda}(wv)\\
        &\implies w\in \tilde{D}(v).
    \end{align*}
    and hence $|\tilde{D}(v)|\geq |D(v)|$ for all such vertices. In particular, $$|\tilde{D}(v)|\geq \alpha|N_{G_{\downarrow}}(v)| + \beta = \alpha|N_{\tilde{G}_{\downarrow}}(v)| + \beta.$$Because $|\tilde{D}| \leq |D|\leq k$ by construction, and since $v\in V$ was arbitrary such that $(v,t)\not\in D$ for all $t\in [\Lambda]$, it follows that $\tilde{D}$ is a valid certificate for $\tilde{x}$. 

    On the other hand, suppose $\tilde{D}'$ certifies $\tilde{x}$ as a \emph{yes}-instance. Construct $D'$ by adding $(v,\sigma^{-1}(t))$ whenever $(v,t)\in \tilde{D}'$. For each $v\in V$ such that $(v,t)\not\in \tilde{D}'$ for all $t\in [\Lambda]$, it follows that $(v,\sigma^{-1}(t))\not\in D'$, and hence $D'(v)$ is well-defined. In particular, 
    \begin{align*}
        \forall w\in \tilde{D}'(v) &\implies \exists t': (w,t')\in \tilde{D}', t'\in\tilde{\lambda}(wv)\\
        &\implies \exists t\in\tau_{\mathcal{G}}^{-1}(\sigma^{-1}(t')): (w,t)\in D', t\in \lambda(wv)\\
        &\implies w\in D'(v),
    \end{align*}
    The result follows symmetrically to the previous case.
  \end{proof}

\section{Parameterised tractability of $(\alpha,\beta)$-TDS}
\label{sec:fpt-a-b}

The goal of this section is to prove the following.

\begin{thm}
\label{thm:tds-fpt-tnd}
    For all $\alpha\in[0,1]$, $(\alpha,\beta)$-\textsc{tds} is in FPT with respect to TND, even if $\beta$ is part of the input. 
\end{thm}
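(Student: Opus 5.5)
We first shrink the lifetime, then solve the problem by an integer linear program whose number of variables depends only on $\phi$. By Lemma~\ref{lem:a-b-tds-snapshot-set}, $(\alpha,\beta)$-\textsc{tds} is a snapshot-set problem. So Lemma~\ref{lem:equiv-instance-ss-problem} lets us replace the instance, in time $O(2^{\phi^2} n^2\Lambda)$, by an equivalent reduced instance of lifetime $\tilde\Lambda\le 2^{\phi^2}$ (Lemma~\ref{lem:lifetime_bound}). A minimum temporal neighbourhood partition $\Theta=\{B_1,\dots,B_\phi\}$ is needed here; it can be computed in polynomial time by comparing the temporal types of all pairs of vertices, since having the same temporal type is an equivalence relation. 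From now on we assume $\Lambda\le 2^{\phi^2}$.

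Next we restrict the shape of solutions. Within a block, all vertices are interchangeable: any permutation of a block is an automorphism of every snapshot. Moreover, placing a vertex $u$ in $D$ at several times only helps to dominate more neighbours, and this help depends only on $u$'s block and on the set of chosen times. So for each vertex in $D$ we record its \emph{pattern} $S\subseteq[\Lambda]$, the nonempty set of times at which it appears. A solution is then described, up to the symmetry, by integers $x_{i,S}\ge 0$, the number of vertices of $B_i$ with pattern $S$. We also set $y_i=|B_i|-\sum_S x_{i,S}$, the number of vertices of $B_i$ never in $D$. The number of variables is at most $\phi\cdot 2^{\Lambda}\le\phi\cdot 2^{2^{\phi^2}}$, which is a function of $\phi$ only. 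The cost of the solution is $\sum_{i,S}|S|\,x_{i,S}\le k$.

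The domination condition must then hold for every vertex $v\in B_j$ with $y_j\ge 1$. Neighbours of $v$ in the underlying graph lie in the blocks $B_i$ for which block $i$ is joined to block $j$ at some time (for $i=j$, when $B_j$ is a clique at some time). The count $|N_\downarrow(v)|$ is a constant $d_j$ computable from $\Theta$. The contribution to $|D(v)|$ from block $i\ne j$ is $\sum_{S: S\cap T_{ij}\ne\emptyset} x_{i,S}$, where $T_{ij}$ is the set of times at which $B_i$ and $B_j$ are complete to each other. The same expression works for $i=j$ with $T_{jj}$ the set of clique times, because $v$ itself is not in $D$. None of this depends on which $v\in B_j$ we take, so every $v\in B_j$ gives the same linear inequality $\sum_i\sum_{S\cap T_{ij}\ne\emptyset}x_{i,S}\ge \alpha d_j+\beta$.

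The main obstacle is the conditional nature of these constraints: they apply only when $y_j\ge 1$. We handle this by guessing the set $J\subseteq[\phi]$ of blocks that are entirely covered, which gives $2^\phi$ guesses. For $j\in J$ we impose $y_j=0$ and drop the domination constraint. For $j\notin J$ we impose $y_j\ge1$ together with the domination constraint. To make the constraints integral we multiply by the denominator of $\alpha$; this keeps the coefficients polynomially encoded, and $\beta$ can simply be part of the input as a right-hand side. Each guess gives an ILP with a number of variables bounded by a function of $\phi$, which Lenstra's algorithm (in Kannan's or Frank--Tardos's version) solves in FPT time. A feasible solution converts back into an actual set $D$ by assigning patterns to arbitrary vertices of each block. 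Conversely, any solution $D$ yields a feasible point for the guess $J=\{j: y_j=0\}$. Together these give Theorem~\ref{thm:tds-fpt-tnd}, and the argument works for every $\alpha\in[0,1]$.
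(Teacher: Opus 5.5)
Your proposal is correct and follows essentially the same route as the paper. Both arguments reduce to lifetime at most $2^{\phi^2}$ via the snapshot-set lemmas, guess which blocks contain a never-selected vertex, and solve the resulting ILP in the variables $x_{i,S}$ with a Lenstra-type algorithm; your $J$ is the complement of the paper's $\mathcal{I}$, and your $y_j$ plays the role of $x_{j,\emptyset}$. Your variable count $\phi\cdot 2^{2^{\phi^2}}$ is the accurate one, whereas the paper's stated bound $2^{\phi}\cdot 2^{\phi^2}$ undercounts, though the conclusion is unaffected; your remarks on computing $\Theta$ in polynomial time and clearing the denominator of $\alpha$ fill in details the paper leaves implicit.
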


To prove Theorem \ref{thm:tds-fpt-tnd}, we first define an integer linear program (ILP) to represent the $(\alpha,\beta)$-\textsc{tds} problem. Consider an instance $(\mathcal{G},\Gamma)$ of $(\alpha,\beta)$-\textsc{tds}, where $\alpha,\beta,k\in \Gamma$, such that $\mathcal{G}$ has TND $\phi$ (with corresponding temporal neighbourhood partition $\Theta = \{B_1, \dots, B_{\phi}\}$), and lifetime $\Lambda$. Let $\mathcal{I}\in 2^{[\phi]}$. Intuitively,  $\mathcal{I}$ is an index set which represents a ``guess'' of blocks in $\Theta$ with \emph{at least one} vertex not included in the temporal dominating set, at any time. We then analyse such blocks against the $(\alpha,\beta)$-temporal dominating condition.

For each $i\in [\phi]$ and each $S\in 2^{[\Lambda]}$, introduce the ILP variable $x_{i,S}\in\mathbb{Z}_{\geq 0}$ to represent the number of vertices in block $B_i$ which are selected \emph{exactly} at times in the set $S$. Let $\gamma_i$ represent the size of the open underlying neighbourhood of an arbitrary vertex in $B_i$. This allows us to encode the expression $\alpha |N_{\downarrow}(v)| + \beta$ in the ILP. Finally, if $\lambda$ is the edge labelling function of $\mathcal{G}$, let the shorthand $\lambda(ij)$ denote the set of times when blocks $B_i$ and $B_j$ are adjacent, since the set of times for which edges between vertices of the two blocks will be identical for all vertices within the blocks. Let $\lambda(ii)$ be the set of times when block $i$ is a clique. Consider the following integer linear program: 

\begin{lp}[ILP$(\mathcal{I},\vec{\gamma}\in\mathbb{Z}_{\geq 0}^{\phi})$]
\label{lp:tnd-tds}
    \begin{align}
        x_{i,\emptyset} > 0&\quad\forall i\in \mathcal{I}\label{lp:2}\\
        x_{i,\emptyset} = 0&\quad\forall i\not\in\mathcal{I}\label{lp:3}\\
        \sum_{S\in 2^{[\Lambda]}}x_{i,S}= |B_i|&\quad\forall i\in [\phi]\label{lp:4}\\
        \sum_{i\in [\phi]}\sum_{S\in 2^{[\Lambda]}} |S|\cdot x_{i,S}\leq k&\label{lp:5}\\
        \sum_{j\in [\phi]}\sum_{\substack{
            S\in 2^{[\Lambda]}:\\
            \lambda(ij)\cap S\neq \emptyset}} x_{j,S}\geq \alpha\gamma_i + \beta&\quad\forall i\in \mathcal{I}\label{lp:6}
    \end{align}
\end{lp}

Algorithm \ref{alg:a-b-tds} provides the general workflow to solve $(\alpha,\beta)$-\textsc{tds}, and executes ILP$(\mathcal{I},\vec{\gamma})$ on a sequence of candidate solutions.

\begin{algorithm}
    \caption{$(\alpha,\beta)$-\textsc{tds}$(x = (\mathcal{G},\Gamma))$}\label{alg:a-b-tds}
    \begin{algorithmic}
    \State{$\Theta = \{B_1,\dots, B_{\phi}\}$ is the minimum temporal neighbourhood partition of $\mathcal{G}$; $\gamma\in\mathbb{Z}^{\phi}_{\geq 0}$ is defined such that $\gamma_i = |N_{\downarrow}(v)|$ for all $v\in B_i$}
        \For{$\mathcal{I}\in 2^{[\phi]}$}
            \If{ILP$(\mathcal{I},\vec{\gamma})$ is \emph{feasible}}
                \State{\Return \textbf{True}}
            \EndIf
        \EndFor
    \State{\Return \textbf{False}}
    \end{algorithmic}
\end{algorithm}

The next lemma shows that Linear Program \ref{lp:tnd-tds} properly encodes the problem. 

\begin{lem}
\label{lem:ilp-correctness}
    Let $(\mathcal{G},\Gamma)$ be an instance of $(\alpha,\beta)$-\textsc{tds}, where $\alpha,\beta,k\in \Gamma$, such that $\mathcal{G}$ has TND $\phi$, with corresponding temporal neighbourhood partition $\Theta = \{B_1, \dots, B_{\phi}\}$, and lifetime $\Lambda$. Then, $(\mathcal{G},\Gamma)$ is a \emph{yes}-instance of $(\alpha,\beta)$-\textsc{tds} if and only if there exists $\mathcal{I}\in 2^{[\phi]}$ such that ILP$(\mathcal{I},\vec{\gamma})$ is feasible.
\end{lem}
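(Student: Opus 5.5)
The proof goes through a bijective correspondence between temporal dominating sets $D$ and ILP solutions, recording for each vertex the exact set of times at which it is selected. For $u\in V$ let $S_u:=\{t\in[\Lambda]:(u,t)\in D\}$. The central step is a counting identity for the left-hand side of (\ref{lp:6}).

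\textbf{Key identity.} Fix a block $B_i$ and a vertex $v\in B_i$ with $S_v=\emptyset$. Then
\[
|D(v)| \;=\; \sum_{j\in[\phi]}\bigl|\{u\in B_j : S_u\cap\lambda(ij)\neq\emptyset\}\bigr| .
\]
To see this, recall that $u\in D(v)$ if and only if there is $t\in S_u$ with $t\in\lambda(uv)$. By the structure of a temporal neighbourhood partition (\cite[Lemma 9]{EnrightHandLJMeeks2024}), for $u\in B_j$ with $u\neq v$ we have $\lambda(uv)=\lambda(ij)$. For $j\neq i$ this is the common set of times at which $B_i$ and $B_j$ are complete to each other. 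For $j=i$ it is the set of times at which $B_i$ is a clique. So $u\in D(v)$ exactly when $S_u\cap\lambda(ij)\neq\emptyset$. The term $u=v$ is harmless: since $S_v=\emptyset$, $v$ is never counted on either side. Grouping the vertices $u$ by block and by their set $S_u$ turns the right-hand side into $\sum_j\sum_{S:\,\lambda(ij)\cap S\neq\emptyset}x_{j,S}$, which is the left-hand side of (\ref{lp:6}). In particular $|D(v)|$ depends only on the block of $v$ and on the counts $x_{j,S}$.

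\textbf{Forward direction.} Given a certificate $D$ with $|D|\le k$, set $x_{i,S}:=|\{u\in B_i: S_u=S\}|$ and $\mathcal{I}:=\{i: \exists v\in B_i,\ S_v=\emptyset\}$.
\begin{itemize}
\item Constraints (\ref{lp:2}) and (\ref{lp:3}) hold by the choice of $\mathcal{I}$.
\item Constraint (\ref{lp:4}) holds because the sets $S_u$ classify every vertex of $B_i$ exactly once.
\item For (\ref{lp:5}), $\sum_{i,S}|S|\,x_{i,S}=\sum_u|S_u|=|D|\le k$.
\item For (\ref{lp:6}), take any $i\in\mathcal{I}$ and any $v\in B_i$ with $S_v=\emptyset$. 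The key identity together with the $(\alpha,\beta)$-temporal dominating condition gives the inequality, using $\gamma_i=|N_\downarrow(v)|$.
\end{itemize}

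\textbf{Backward direction.} Given $\mathcal{I}$ and a feasible integer solution $x$, use (\ref{lp:4}) to partition each block $B_i$ arbitrarily into groups of sizes $x_{i,S}$, one group for each $S\in 2^{[\Lambda]}$. Assign $S_u:=S$ to every $u$ in the group of $S$, and let $D:=\{(u,t): t\in S_u\}$. Then $|D|\le k$ by (\ref{lp:5}). The vertices never appearing in $D$ are exactly those with $S_u=\emptyset$. By (\ref{lp:3}) every such vertex lies in a block $B_i$ with $i\in\mathcal{I}$. For each of them, the key identity (which holds for any $D$) together with (\ref{lp:6}) yields $|D(v)|\ge\alpha\gamma_i+\beta$. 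Hence $D$ is an $(\alpha,\beta)$-temporal dominating set.

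\textbf{Main obstacle.} The only delicate step is the key identity. It needs care on two points: the diagonal case $j=i$, where adjacency is governed by $\lambda(ii)$ and $v$ itself must be excluded, and the claim that $\lambda(uv)$ is uniform across each pair of blocks. Both follow from the temporal-type definition. Once the identity is established, both directions are routine bookkeeping.
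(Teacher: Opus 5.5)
Your proof is correct and follows essentially the same route as the paper: you encode a certificate $D$ by the counts $x_{i,S}$ of vertices in $B_i$ selected at exactly the times in $S$, and you verify (\ref{lp:6}) via the identity $|D(v)|=\sum_{j}\sum_{S:\lambda(ij)\cap S\neq\emptyset}x_{j,S}$ for a never-selected $v\in B_i$. Your version is slightly tidier than the paper's, since you apply the identity directly to a never-selected vertex and get an equality, where the paper instead passes through an arbitrary $u\in B_i$ and an inequality, and you also treat the diagonal term $j=i$ explicitly.
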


\begin{proof}
    Suppose first that ILP$(\mathcal{I},\gamma)$ is feasible with certificate $X$. Let $D\subseteq V\times [\Lambda]$ be constructed such that each block $B_i$ contributes $x_{i,S}$ temporal vertices to $D$ at exactly the times in $S$. We claim that $D$ is a valid certificate for $(\mathcal{G},\Gamma)$ as a \emph{yes}-instance of $(\alpha,\beta)$-\textsc{tds}. 

    By constraint \ref{lp:4} of Linear Program \ref{lp:tnd-tds}, there are always enough remaining vertices to be included in $D$ at exactly the prescribed times. Constraints \ref{lp:2} and \ref{lp:3} ensure the set $\mathcal{I}$ indeed contains exactly the indices of blocks which have a positive number of vertices that are \emph{never} included in $D$, and hence are the vertices that must satisfy the $(\alpha,\beta)$-temporal dominating condition. By constraint \ref{lp:5}, the size of $D$ is at most $k$. 
    
    Consider now an arbitrary vertex $v\in B_i$ such that $i\in \mathcal{I}$, and $v$ is one of the $x_{i,\emptyset}$ vertices \emph{not} included in $D$ at any time. Then, for arbitrary $B_j\in \Theta$, observe:
    \begin{align*}
        |\{u&\in B_j:\exists t, (u,t)\in D, t\in\lambda(ij)\}|\\
        &= \sum_{S\in 2^{[\Lambda]}}|\{v\in B_j: \forall t\in S, (v,t)\in D,\forall t'\not\in S, (v,t')\not\in D, S\cap \lambda(ij)\neq \emptyset\}|\\
        &=\sum_{\substack{
            S\in 2^{[\Lambda]}:\\
            \lambda(ij)\cap S\neq \emptyset}} x_{j,S}
    \end{align*}and so
    \begin{align*}
        |D(v)|&=|\{u\in V:\exists t, \exists j,(u,t)\in D\cap (B_j\times\{t\}), t\in\lambda(ij)\}|\\
        &=\sum_{j\in[\phi]}\sum_{\substack{
            S\in 2^{[\Lambda]}:\\
            \lambda(ij)\cap S\neq \emptyset}} x_{j,S}.
    \end{align*}
    By constraint \ref{lp:6}, we know that this is at least $\alpha\gamma_i + \beta$ for all $i\in\mathcal{I}$, which, as discussed, is exactly the ILP encoding of $\alpha|N_{\downarrow}(v)|+\beta$. Thus, $D$ indeed certifies $(\mathcal{G},\Gamma)$ as a \emph{yes}-instance of $(\alpha,\beta)$-\textsc{tds}. 

    On the other hand, suppose $D'$ is a certificate for $(\mathcal{G},\Gamma)$. Construct the following input to Linear Program \ref{lp:tnd-tds}. Let $\mathcal{I}$ be a subset of $[\phi]$ corresponding to the index set of blocks in $\Theta$ containing a vertex \emph{not in $D'$}, for all times $t\in [\Lambda]$. Subsequently, for all $i\in\mathcal{I}$, define $x_{i,\emptyset}'$ to be the number of such vertices in block $B_i$; for all $i\not\in \mathcal{I}$, set $x_{i,\emptyset}' = 0$. For all other non-empty $S\in 2^{[\Lambda]}$, set $x_{i,S}'$ to be the number of vertices in block $B_i$ included in $D'$ at \emph{exactly} the times in $S$. We claim $X' = [x_{i,S}']_{i\in [\phi], S\in 2^{[\Lambda]}}$ certifies that $ILP(\mathcal{I},\gamma)$ is feasible. 

    Constraints \ref{lp:2} and \ref{lp:3} follow by construction of the set $\mathcal{I}$, and constraint \ref{lp:4} is satisfied since no vertex from a single block can be added for \emph{exactly} two distinct sets of times, $S\neq S'$. Constraint \ref{lp:5} again follows since $|D'|\leq k$ (note that if $S = \emptyset$, $|S| = 0$). Finally, consider arbitrary $i\in\mathcal{I}$, and any vertex $u\in B_i$. Let $w\in B_i$ be any vertex such that $(w,t')\not\in D'$ for all times $t'\in[\Lambda]$. Then, 
    \begin{align*}
        \sum_{j\in [\phi]}\sum_{\substack{
            S\in 2^{[\Lambda]}:\\
            \lambda(ij)\cap S\neq \emptyset}} x_{j,S}' =& \sum_{j\in [\phi]}\sum_{\substack{
            S\in 2^{[\Lambda]}:\\
            \lambda(ij)\cap S\neq \emptyset}} |\{v\in B_j: s(v) = S\}|\\
            =& \sum_{j\in [\phi]}|\{v\in B_j:\exists t\in[\Lambda], (v,t)\in D', t\in\lambda(ij)\}|\\
            \geq&
            |\{v\in B_i:\exists t\in[\Lambda], (v,t)\in D', t\in\lambda(uv)\}| \\&+\sum_{j\in [\phi]\setminus \{i\}}|\{v\in B_j:\exists t\in[\Lambda], (v,t)\in D', t\in\lambda(ij)\}|\\
            =& |\{v\in V:\exists t\in[\Lambda], (v,t)\in D', t\in\lambda(uv)\}|.
    \end{align*}
    where the inequality in line 3 captures that, if there is some $t\in [\Lambda]$ for which $(u,t)\in D'$, the sum in line 2 will count $u$, but the sum in line 3 will not. Since $w,u\in B_i$, both vertices have the same temporal type. Thus, we obtain 
    \begin{align*}
        &|\{v\in V:\exists t\in[\Lambda], (v,t)\in D', t\in\lambda(uv)\}|\\&= |\{v\in V:\exists t\in[\Lambda], (v,t)\in D', t\in\lambda(wv)\}|\\
    &= |D'(w)|.
    \end{align*}

    Since $D'$ certifies $(\mathcal{G},\Gamma)$, for all $i\in \mathcal{I}$ we have $$\sum_{j\in [\phi]}\sum_{\substack{
            S\in 2^{[\Lambda]}:\\
            \lambda(ij)\cap S\neq \emptyset}} x_{j,S}'\geq |D'(w)| \geq \alpha |N_{\downarrow}(w)|+\beta =\alpha\gamma_i + \beta,$$and so condition \ref{lp:6} is satisfied, and the result follows.
  \end{proof}

From this, the correctness of Algorithm \ref{alg:a-b-tds} is an immediate corollary.

\begin{cor}[Correctness of Algorithm \ref{alg:a-b-tds}]
\label{lem:correctness-alg}
    Let $x=(\mathcal{G},\Gamma)$ be an instance of $(\alpha,\beta)$-\textsc{tds}, where $\alpha,\beta,k\in \Gamma$, such that $\mathcal{G}$ has temporal neighbourhood diversity $\phi$, with corresponding temporal neighbourhood partition $\Theta = \{B_1, \dots, B_{\phi}\}$, and lifetime $\Lambda$. Then, Algorithm \ref{alg:a-b-tds} returns \textbf{True} if and only if $x$ is a \emph{yes}-instance of $(\alpha,\beta)$-\textsc{tds}.
\end{cor}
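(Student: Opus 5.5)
The corollary should follow directly from Lemma \ref{lem:ilp-correctness}, so the plan is simply to match the control flow of Algorithm \ref{alg:a-b-tds} with the two directions of that lemma. First I would note that the algorithm's preprocessing is well-defined. It computes a minimum temporal neighbourhood partition $\Theta$ and the vector $\vec{\gamma}$. Since all vertices in a block $B_i$ have the same temporal type, they have the same underlying neighbourhood outside $\{u,v\}$. For $|B_i|\geq 2$ the block is a clique in the underlying graph exactly when $\lambda(ii)\neq\emptyset$. Hence $|N_{\downarrow}(v)|$ is constant on $B_i$, and $\gamma_i$ is well-defined. This is the only point needing a small check beyond the lemma.

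For the forward direction, suppose the algorithm returns \textbf{True}. It does so only inside the loop, when ILP$(\mathcal{I},\vec{\gamma})$ is feasible for the current $\mathcal{I}\in 2^{[\phi]}$. That witness $\mathcal{I}$ satisfies the right-hand side of Lemma \ref{lem:ilp-correctness}, so $x$ is a \emph{yes}-instance.

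For the reverse direction, suppose $x$ is a \emph{yes}-instance. By Lemma \ref{lem:ilp-correctness}, some $\mathcal{I}^*\in 2^{[\phi]}$ makes ILP$(\mathcal{I}^*,\vec{\gamma})$ feasible. The loop ranges over all of $2^{[\phi]}$, so it either reaches $\mathcal{I}^*$ and returns \textbf{True}, or it returned \textbf{True} earlier. Conversely, if no $\mathcal{I}$ is feasible, the loop finishes and the algorithm returns \textbf{False}, which by the lemma is correct for a \emph{no}-instance.

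There is no real obstacle here; the substantive work is in Lemma \ref{lem:ilp-correctness}. The one subtlety is that the lemma is stated for the partition $\Theta$ and vector $\vec{\gamma}$ that the algorithm actually uses. The proof should state explicitly that the algorithm's $\Theta$ and $\vec{\gamma}$ coincide with those in the lemma's hypotheses. Any temporal neighbourhood partition works for the lemma, so the minimality of $\Theta$ is not needed for correctness, only for the running time.
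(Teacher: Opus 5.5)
Your proposal is correct and matches the paper, which treats this as an immediate consequence of Lemma \ref{lem:ilp-correctness}: the loop over $2^{[\phi]}$ corresponds exactly to the lemma's existential quantifier over $\mathcal{I}$. Your extra check that $\gamma_i$ is well-defined is something the paper leaves implicit; it also follows directly from $N_{\downarrow}(u)\setminus\{v\} = N_{\downarrow}(v)\setminus\{u\}$ together with the symmetry of the edge $uv$, so no clique/independent-set case split is needed.
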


To evaluate the runtime of Algorithm \ref{alg:a-b-tds} in the proof of Theorem \ref{thm:tds-fpt-tnd}, we use the following result \cite{frankTardos1987,lenstra1983}. 

\begin{thm}[\cite{LafondLuo2023}, Theorem 4]
\label{thm:ilp-runtime}
    An integer linear program on $n$ variables can be solved in time $O(n^{2.5n + O(n)}\cdot|x|)$, where $|x|$ is the size of the instance.
\end{thm}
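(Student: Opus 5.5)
This result is not proved from scratch in the paper; it is the Lenstra--Kannan algorithm for integer programming in fixed dimension, combined with the Frank--Tardos preprocessing. So my plan is to assemble those ingredients rather than give a self-contained argument. The idea is to reduce integer feasibility in dimension $n$ to a recursion that branches over $n^{O(n)}$ lower-dimensional subproblems. I will then use Frank--Tardos to make the arithmetic cost essentially linear in the encoding length $|x|$.

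\textbf{Step 1: the dimension recursion.} I first treat feasibility: given $Ax\le b$ with $x\in\mathbb{Z}^n$, decide whether an integer point exists.
\begin{itemize}
\item Compute an approximate L\"owner--John ellipsoid of the polytope $P=\{x: Ax\le b\}$, and apply a linear transformation making it close to a ball.
\item Compute a reduced basis of the transformed lattice, by LLL, or by Kannan's HKZ-type reduction for the sharper constant.
\item If the inscribed ball is large relative to the basis, then $P$ must contain a lattice point.
\item Otherwise $P$ is ``flat'' in some lattice direction $c$: the width $\max_{x\in P} c^\top x-\min_{x\in P} c^\top x$ is at most $n^{O(1)}$ with Kannan's reduction.
\item In the flat case, branch on the at most $n^{O(1)}$ integer values of $c^\top x$. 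Each branch is an $(n-1)$-dimensional instance.
\end{itemize}
The number of leaves is then $\prod_{d\le n} d^{O(1)}=n^{O(n)}$, and with Kannan's constants the exponent becomes $2.5n+O(n)$.

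\textbf{Step 2: from feasibility to optimisation.} Optimisation reduces to feasibility by binary search on the objective value. For our application in Algorithm \ref{alg:a-b-tds} only feasibility of ILP$(\mathcal{I},\vec\gamma)$ is needed, so this step is not even required here.

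\textbf{Step 3: the dependence on $|x|$ (expected main obstacle).} Each recursion node does polynomially many arithmetic operations on numbers whose bit-length must stay polynomial in $|x|$. Naively this gives a factor $\mathrm{poly}(|x|)$ rather than $|x|$. To fix this I would apply the Frank--Tardos simultaneous Diophantine approximation:
\begin{itemize}
\item it replaces $A$, $b$ (and any objective) by equivalent data whose entries have bit-length bounded by a function of $n$ only;
\item it runs in time polynomial in $n$ and linear in $|x|$.
\end{itemize}
The recursion then runs on numbers whose size depends only on $n$. The total is $n^{2.5n+O(n)}$ for the search plus $O(|x|)$ times a factor in $n$ for preprocessing, which is the stated bound $O(n^{2.5n+O(n)}\cdot |x|)$. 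Keeping track of this bit-length bookkeeping carefully is the delicate part, so in the write-up I would simply defer to \cite{frankTardos1987,lenstra1983} and to the form stated in \cite{LafondLuo2023} for the exact exponent.
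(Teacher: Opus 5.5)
The paper gives no proof of this statement. It is quoted from \cite{LafondLuo2023} with pointers to \cite{lenstra1983,frankTardos1987}, so deferring to that literature is exactly what the paper does. Your Step~1 is a fair outline. Since $n^{2.5n+O(n)}=n^{O(n)}$, you need not chase Kannan's constant. However, plain LLL is not enough: it only guarantees width $2^{O(n^2)}$, hence $2^{O(n^3)}$ leaves, so the choice of reduction affects the exponent, not just the constant.

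\textbf{Step~3 fails as stated.} It is the only place where you argue for the dependence on $|x|$, and it misdescribes Frank--Tardos. No preprocessing can replace $Ax\le b$ by a system with the same integer solutions whose entries have bit-length bounded by a function of $n$ alone. For $n=1$, the constraints $x\ge 2^L$, $x\le 2^L$ have the unique integer solution $2^L$. Any system with that solution set must contain an upper bound $b_i/a_i\in[2^L,2^L+1)$, and hence entries of $\Omega(L)$ bits.

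What Frank--Tardos actually gives is the following: for $w\in\mathbb{Q}^n$ and an integer $N$, an integral $\bar w$ with $\|\bar w\|_\infty\le 2^{O(n^3)}N^{O(n^2)}$ and $\mathrm{sign}(w\cdot y)=\mathrm{sign}(\bar w\cdot y)$ for all integral $y$ with $\|y\|_1<N$. To apply this to the rows of $(A,b)$, $N$ must come from an a priori bound on a smallest integer solution. So the new bit-lengths are polynomial in $n$ and in the original coefficient sizes, not in $n$ alone. The gain is that they become independent of the number of constraints and stay polynomially bounded through the recursion, which is what yields polynomial space.

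\textbf{Consequences for your accounting.} The split ``$n^{2.5n+O(n)}$ for the search plus $O(|x|)$ for preprocessing'' does not follow. The recursion still handles all constraints, with numbers of bit-length polynomial in $|x|$, at each of its $n^{O(n)}$ nodes. This is why the factor $|x|$ multiplies the $n^{O(n)}$ term in the statement rather than being added to it.

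Moreover, the standard form of this result bounds the number of \emph{arithmetic operations} by $O(n^{2.5n+o(n)}\cdot|x|)$, on numbers of size (and in space) polynomial in $|x|$, rather than the bit-complexity. So ``time'' in the statement should be read in that sense, or weakened to $n^{O(n)}\cdot|x|^{O(1)}$. Either reading suffices for the only use of the theorem, in the proof of Theorem~\ref{thm:tds-fpt-tnd}. The fix is to drop your Step~3 mechanism and simply cite the result, as the paper does.
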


We are finally ready to prove the main result of this section.

\begin{proof}[Theorem \ref{thm:tds-fpt-tnd}]
    The correctness of Algorithm \ref{alg:a-b-tds} is established by Corollary \ref{lem:correctness-alg}. By Lemma \ref{lem:equiv-instance-ss-problem}, recall that from any instance $x = (\mathcal{G},\Gamma)$, where $\mathcal{G}$ has lifetime $\Lambda$, $n$ vertices, and TND $\phi$, we can obtain the \emph{reduced instance} $\tilde{x} = (\tilde{\mathcal{G}},\Gamma)$ where $\tilde{G}$ has lifetime $\tilde{\Lambda} \leq 2^{\phi^2}$, in time at most $O(2^{\phi^2}\cdot n^2 \cdot \Lambda)$. 
    
    On $\tilde{x}$, Algorithm \ref{alg:a-b-tds} executes Linear Program \ref{lp:tnd-tds} at most $2^{\phi}$ times. Moreover, for arbitrary choice of set $\mathcal{I}\in 2^{[\phi]}$, we know that the number of variables in ILP$(\mathcal{I},\vec{\gamma})$ is at most $|\mathcal{I}|\cdot 2^{\tilde{\Lambda}} \leq 2^{\phi}\cdot 2^{\phi^2}$ on instance $\tilde{x}$. Because this is exponential only in the TND of $\tilde{\mathcal{G}}$, Theorem \ref{thm:ilp-runtime} gives an fpt-runtime for solving $(\alpha,\beta)$-\textsc{tds} on instance $\tilde{x}$. Note that including the pre-processing step to obtain $\tilde{x}$ from $x$ does not change the asymptotic bound. Hence, we conclude that Algorithm \ref{alg:a-b-tds} is an fpt-algorithm with respect to TND, and that $(\alpha,\beta)$-\textsc{tds} is in FPT.
  \end{proof}

\section{Parameterised hardness of $(\alpha,\beta)$-TDS}
\label{sec:parm-hardness-a-b}

The first goal of this section is to prove the following theorem.

\begin{thm}
\label{thm:tmw-w1-hard}
    For $\alpha\in [0,1]$ and $\beta$ in the input, $(\alpha,\beta)$-\textsc{tds} is W[1]-hard with respect to TMW. Moreover, if $\alpha\in (0,1)$, $(\alpha,\beta)$-\textsc{tds} is W[1]-hard with respect to TMW \emph{even if} $\beta$ is a constant.
\end{thm}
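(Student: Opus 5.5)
The plan is a parameterised reduction from the static problem, using Lemma \ref{lem:a-b-property-lemma} as the bridge. Lafond and Luo \cite{LafondLuo2023} show that $(\alpha,\beta)$-\textsc{ds} is W[1]-hard with respect to modular-width for every $\alpha\in[0,1]$ when $\beta$ is part of the input. They also show that it remains W[1]-hard with respect to modular-width for $\alpha\in(0,1)$ when $\beta$ is a fixed constant (see \cite[Corollary 2]{LafondLuo2023}). I will take an arbitrary instance $(G,\Gamma)$ of $(\alpha,\beta)$-\textsc{ds}, with $\alpha,\beta,k$ encoded in $\Gamma$. I map it to the temporal instance $x=(\mathcal{G},\Gamma)$, where $\mathcal{G}=(G,\lambda)$ and $\lambda(e)=\{1\}$ for every $e\in E(G)$. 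Thus $\mathcal{G}$ has lifetime $\Lambda=1$ and $G_1=G_{\downarrow}=G$. This construction is clearly polynomial, and the string $\Gamma$ is unchanged. In particular, $\beta$ stays part of the input in the first case and stays the same constant in the second case.

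Correctness is immediate from Lemma \ref{lem:a-b-property-lemma}. Since $\Lambda=1$, $x$ is a \emph{yes}-instance of $(\alpha,\beta)$-\textsc{tds} if and only if $(G_1,\Gamma)=(G,\Gamma)$ is a \emph{yes}-instance of $(\alpha,\beta)$-\textsc{ds}.

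For the parameter bound, I will show that the temporal modular-width of $\mathcal{G}$ is at most the modular-width of $G$. Take a static modular decomposition expression for $G$ of width $mw(G)$, built from vertex creation and substitutions $G'(G_1,\dots,G_n)$ with $n\le mw(G)$. I turn it into a temporal expression by replacing every static template graph $G'$ with the temporal graph $(G',\lambda')$, where $\lambda'\equiv\{1\}$ on $E(G')$, and keeping the same operands. By induction on the expression, each constructed subgraph equals its static counterpart with all edges active exactly at time $1$. This holds because operation T2 adds the time-edge $(vw,1)$ exactly when the static substitution adds the edge $vw$. The widths coincide, so $\mu(\mathcal{G})\le mw(G)$. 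Combined with (\ref{ineq:param-hierarchy}), this also gives the corresponding TCW statement if desired, though only TMW is needed here.

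The argument is therefore a parameter-preserving polynomial reduction, and both W[1]-hardness claims transfer directly from the static results. The only step needing real care is the inductive check that the temporal substitution operation reproduces exactly the single-snapshot graph, with no extra times and no missing edges. I expect this to be routine. The one other point to verify is that the cited static hardness is stated for exactly the ranges of $\alpha$ and $\beta$ claimed: all $\alpha$ with $\beta$ in the input, and $\alpha\in(0,1)$ with $\beta$ constant.
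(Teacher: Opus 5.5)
Your proposal is correct and follows the paper's proof. Both use the same lifetime-one embedding, get correctness from Lemma~\ref{lem:a-b-property-lemma}, and transfer hardness from \cite[Corollary 2]{LafondLuo2023}. The one difference is that you prove $\mu(\mathcal{G})\le mw(G)$ directly by translating the static expression, which is the only direction a parameterised reduction needs. The paper instead cites the observation of Enright \emph{et al.}\ \cite{EnrightHandLJMeeks2024} that the two widths coincide when all edges have the same active times.
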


\begin{proof}
    We proceed via reduction from $(\alpha,\beta)$-\textsc{ds}. For an arbitrary instance $x = (G,\Gamma)$ of $(\alpha,\beta)$-\textsc{ds} (where $\alpha,\beta,k$ are encoded in $\Gamma$), we consider the instance $\tilde{x} = (\tilde{\mathcal{G}},\Gamma)$ of $(\alpha,\beta)$-\textsc{tds} such that $\tilde{\mathcal{G}}$ has lifetime $1$, and $\tilde{G}_{\downarrow} = \tilde{G}_1 = G$. In \cite{EnrightHandLJMeeks2024}, Enright \emph{et al.} note that, when all edges are active at the same times, the TMW of a temporal graph is equal to the (\emph{static}) modular-width of its underlying graph. In the case of $\tilde{x}$, we find $$\mu(\tilde{\mathcal{G}}) = mw(\tilde{G}_1) = mw(G).$$By Lemma \ref{lem:a-b-property-lemma} we know that $\tilde{x}$ is a \emph{yes}-instance of $(\alpha,\beta)$-\textsc{tds} if and only if $x$ is a \emph{yes}-instance of $(\alpha,\beta)$-\textsc{ds}. Thus, the result follows by \cite[Corollary 2]{LafondLuo2023}.
  \end{proof}

By inequality \ref{ineq:param-hierarchy}, we get the following corollary. 

\begin{cor}
    For $\alpha\in [0,1]$ and $\beta$ in the input, $(\alpha,\beta)$-\textsc{tds} is W[1]-hard with respect to TCW. Moreover, if $\alpha\in (0,1)$, $(\alpha,\beta)$-\textsc{tds} is W[1]-hard with respect to TCW\emph{even if} $\beta$ is a constant.
\end{cor}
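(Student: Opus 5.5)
The plan is to obtain the corollary directly from Theorem \ref{thm:tmw-w1-hard}, using inequality \ref{ineq:param-hierarchy}, i.e.\ $\text{TMW}\geq\text{TCW}$ for every temporal graph. W[1]-hardness travels down the parameter hierarchy. A parameterised reduction to the problem parameterised by TMW is also a parameterised reduction to the problem parameterised by TCW, because the output parameter only gets smaller.

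The argument runs as follows. Start from the reduction in the proof of Theorem \ref{thm:tmw-w1-hard}. It maps an instance $x=(G,\Gamma)$ of $(\alpha,\beta)$-\textsc{ds} to $\tilde{x}=(\tilde{\mathcal{G}},\Gamma)$, where $\tilde{\mathcal{G}}$ has lifetime $1$. The instance $\tilde{x}$ is equivalent to $x$ by Lemma \ref{lem:a-b-property-lemma}, it is computable in polynomial time, and $\mu(\tilde{\mathcal{G}})=mw(G)$. By inequality \ref{ineq:param-hierarchy}, the temporal cliquewidth of $\tilde{\mathcal{G}}$ is at most $\mu(\tilde{\mathcal{G}})=mw(G)$. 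So the new parameter is bounded by a function (the identity) of the source parameter, which is what a parameterised reduction requires.

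The source problem is $(\alpha,\beta)$-\textsc{ds} parameterised by modular-width. By \cite[Corollary 2]{LafondLuo2023} it is W[1]-hard in both regimes: when $\beta$ is part of the input, for all $\alpha\in[0,1]$; and when $\beta$ is a constant and $\alpha\in(0,1)$. Hence $(\alpha,\beta)$-\textsc{tds} parameterised by TCW is W[1]-hard in the same two regimes. Note that the reduction leaves $\Gamma$, and so $\alpha$ and $\beta$, unchanged, so the constant-$\beta$ case is preserved exactly.

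No step here is a real obstacle. The only point to check is that inequality \ref{ineq:param-hierarchy} is a pointwise inequality holding for every temporal graph, so that it applies to the specific graph $\tilde{\mathcal{G}}$. It is stated that way in the introduction. One could also verify it directly for lifetime-$1$ temporal graphs, where TCW coincides with the static cliquewidth of $G$, and the static bound $cw\leq mw$ (up to the usual constant conventions) gives the same conclusion.
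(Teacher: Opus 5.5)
Your proposal is correct and matches the paper's argument, which derives this corollary directly from Theorem \ref{thm:tmw-w1-hard} and the pointwise bound $\text{TMW}\geq\text{TCW}$ of inequality (\ref{ineq:param-hierarchy}). You additionally spell out why this works: the same lifetime-$1$ reduction produces an instance whose TCW is at most $mw(G)$, so the hardness transfers in both regimes with $\Gamma$ unchanged.
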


Our remaining goal is to show the following: 

\begin{thm}
\label{thm:a-b-p-np-hard}
    $(\alpha,\beta)$-\textsc{tds} is para-NP-hard with respect to TCW when: $\alpha = 0$ and $\beta =1$, or $\alpha = 1$ and $\beta = 0$.
\end{thm}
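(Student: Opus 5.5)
Para-NP-hardness with respect to TCW means that the problem stays NP-hard on temporal graphs whose temporal cliquewidth is bounded by some fixed constant $c$. The plan is therefore to give, for each of the two parameter settings, a polynomial-time reduction from an NP-hard problem that produces temporal graphs of TCW at most $c$. The source problems are \textsc{tds} itself for $(0,1)$, using Lemma~\ref{obs:0-1-is-tds}, and a suitable variant for $(1,0)$. The key point is that static cliquewidth does not give this for free. \textsc{Dominating Set} and \textsc{Vertex Cover} are polynomial on bounded cliquewidth graphs, so the lifetime-one embedding of Lemma~\ref{lem:a-b-property-lemma} cannot work here. The hardness has to come from the temporal dimension: many snapshots, each simple, so that the temporal cliquewidth stays small while the lifetime encodes the instance.

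\textbf{Case $\alpha=0$, $\beta=1$.} I would reduce from \textsc{Set Cover}, or equivalently \textsc{Dominating Set} on general graphs. Given a universe $U=\{u_1,\dots,u_n\}$ and sets $S_1,\dots,S_m$, build one centre vertex $c$ together with element vertices $u_1,\dots,u_n$. The underlying graph is a star centred at $c$; I would add a few pendant gadget vertices only if needed to force $c$ into the solution. Create one snapshot $t_j$ for each set $S_j$, in which $c$ is adjacent exactly to the elements of $S_j$. Selecting $(c,t_j)$ then dominates exactly $S_j$. Any solution of size $k$ can be normalised so that it only selects temporal vertices of $c$, since selecting $(u_i,t)$ dominates only $u_i$ and $c$, and can be swapped for $(c,t_j)$ with $u_i\in S_j$. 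Hence there is a set cover of size $k$ iff there is a $(0,1)$-temporal dominating set of size at most $k$; the only possible off-by-one is whether $c$ itself needs dominating, and $c$ is dominated as soon as any $(c,t_j)$ is chosen.

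\textbf{Bounding TCW for $(0,1)$.} Each snapshot is a star, and the underlying graph is a star, which has cliquewidth at most 2. Following the temporal cliquewidth definition of Enright \emph{et al.}, where a join operation between two labels can be applied at a specified set of times, I would give an explicit expression. The aim is to show TCW $\le 3$, or some other small constant. Here I expect the main obstacle: if joins act label-to-label at a chosen time set, then the elements of the same set need different adjacency time-sets, so each element vertex would need its own label and the width would not be constant. The fix is to let each element vertex be joined to $c$ at its own time set $\lambda(cu_i)=\{j: u_i\in S_j\}$. I would build it as follows: create $c$ with label 1; for each $i$, create $u_i$ with label 2, join labels 1 and 2 at times $\lambda(cu_i)$, then relabel 2 to 3. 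The labels 1, 2 and 3 are the only ones ever used, so TCW $\le 3$. I would check this against the precise definition and adjust if joins are required to be single-time, in which case the expression simply repeats joins per time.

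\textbf{Case $\alpha=1$, $\beta=0$, and conclusion.} I would reuse the same star skeleton but reduce from \textsc{Set Cover} phrased in the complementary way. Now an unselected vertex $v$ needs every underlying neighbour selected at some time adjacent to $v$. Leaving $c$ unselected forces all $n$ leaves into the solution, so the intended solution picks temporal copies of $c$. Each leaf $u_i$ not selected needs $c$ to be selected at some time in $\lambda(cu_i)$, which again amounts to covering the elements by the chosen times. To make sure that the cheapest option is covering via $c$, I would attach to each $u_i$ pendant padding, or duplicate each element into $k+1$ twin leaves with identical time-sets (twins share labels, so TCW stays constant). A solution of size at most $k$ can then never afford to take every copy of an element, which rules out the bypass solutions. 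Both constructions are polynomial and produce constant TCW, which gives para-NP-hardness in both settings.
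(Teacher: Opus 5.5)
Your proposal is correct and follows essentially the paper's route: a \textsc{Set Cover} reduction onto a temporal star with one snapshot per set, combined with the fact that temporal stars have TCW at most $3$. The paper cites Herrmann et al.\ for the $(0,1)$ star hardness and Enright et al.\ for the width bound, where you reprove both, and your $k+1$ twin leaves in the $(1,0)$ case are harmless but unnecessary: the paper uses the plain star, because any selected leaf $(u_i,t)$ can be swapped for $(c,t')$ with $u_i\in S_{t'}$ without increasing $|D|$ or violating any vertex's condition.
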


Since $(0,1)$-\textsc{tds} is \textsc{tds}, the problem is NP-hard even when the underlying graph is a star \cite[Theorem 3.1]{HerrmannKomMora2025}. We show the same holds for $(1,0)$-\textsc{tds}.

\begin{lemrep} 
\label{lem:1-0-tds-np-hard}$(1,0)$-\textsc{tds} is NP-hard even when the underlying graph is a star. 
\end{lemrep}

\begin{proof}
    This follows almost identically to the NP-hardness proof of \textsc{tvc} by Akrida \emph{et al.} \cite{akridaTemporalVertexCover2020}. We reduce from \textsc{Set Cover}, defined as follows. 

    \begin{center}
\begin{tcolorbox}[colback=gray!5!white,colframe=gray!75!black,width=0.975\linewidth]
\noindent\textsc{Set Cover} (\textsc{sc})\\
\textbf{Input:} A universe $U = [n]$, a collection $\mathcal{C}=\{C_1, \dots, C_m\}$ of subsets of $U$ such that $\cup_{C_i\in\mathcal{C}} C_i = U$ and an integer $k$.\\
\textbf{Question:} Is there a subset $\mathcal{C}'\subseteq \mathcal{C}$ such that $|\mathcal{C}'|\leq k$ such that $\cup_{C_i\in\mathcal{C}'}C_i = U$?
\end{tcolorbox}
\end{center}
    
    In particular, given an instance $(U,\mathcal{C},k)$ of \textsc{sc}, construct $\mathcal{G} = (G,\lambda)$ such that $\mathcal{G}$ has lifetime $\Lambda = m$, $G$ is a star on $n+1$ vertices: $$V(G) = \{c, v_1, \dots, v_n\},$$and the mapping $\lambda$ is defined such that, for all times $t\in [m]$, $G_t$ has edges between $c$ and $v_j$ such that $j\in C_t$. 

    ($\implies$): Let $D$ be a minimum size certificate for $x=(\mathcal{G},\Gamma)$ as a \emph{yes}-instance of $(1,0)$-\textsc{tds}, where $k$ is in $\Gamma$. Let $D_t = \{(v,t)\in D: v\in V(G)\}$. Because $G_{\downarrow}$ is a star, and $D$ is minimal, we can assume without loss of generality that for every $t\in [m]$, $D_t = \{(c,t)\}$ or $D_t = \emptyset$. We claim the collection $\mathcal{C}' = \{C_t\in\mathcal{C}: D_t\neq \emptyset\}$ is a \emph{yes}-certificate for $(U,\mathcal{C})$. If $D_t\neq \emptyset$, then $(c,t)\in D$ dominates all vertices $v_j$, $j\in C_t$. Because $D = \cup_{t\in [\Lambda]}D_t$ temporally dominates all vertices in $G_{\downarrow}$, it follows that all $\cup_{C\in\mathcal{C}'} C = U$. Finally, $|D| = |\mathcal{C}'| \leq k$. 

    ($\impliedby$): Suppose now that $\mathcal{C}'$ is a \emph{yes}-certificate for $(U,\mathcal{C},k)$, with $|\mathcal{C}'|\leq k$. We claim that $D := \{(c,t): C_t\in\mathcal{C}'\}$ is a certificate for $x=(\mathcal{G},\Gamma)$ as a \emph{yes}-instance of $(1,0)$-\textsc{tds}. Since $\mathcal{C}'$ covers all elements of $U$, we know all vertices in $G_{\downarrow}$ must be temporally dominated. Moreover, we need only check the $(1,0)$-temporal dominating condition for the leaf vertices $\{v_1, \dots, v_n\}$. However, for all such $v$, $$|N_{\downarrow}(v)| = 1 = |D(v)|,$$and thus the condition holds. Because we added only one vertex $(c,t)$ for each $C_t\in\mathcal{C}'$, it follows that $|D| = |\mathcal{C}'| \leq k$. 
  \end{proof}

Finally, we obtain the main para-NP-hardness result. 

\begin{proof}[Theorem \ref{thm:a-b-p-np-hard}]
    Both $(0,1)$ and $(1,0)$-\textsc{tds} are NP-hard when the underlying graphs are stars, by \cite[Theorem 3.1]{HerrmannKomMora2025} and Lemma \ref{lem:1-0-tds-np-hard}, respectively. Moreover, \cite[Lemma 15]{EnrightHandLJMeeks2024} tells us that, whenever the underlying graph is a star, the temporal graph has TCW at most $3$. Putting these together, the result follows.
  \end{proof}

\section{Conclusion and open problems}
\label{sec:conc}
We have introduced the problem of $(\alpha,\beta)$-\textsc{Temporal Dominating Set}, specifically studying the problem with respect to structural temporal graph parameters. We have obtained a range of results showing both agreement and divergence from the corresponding static problem and parameters. We also defined the class of \emph{snapshot-set} problems, for which only the set of unique snapshots has bearing on instance outcome. 

A natural next step is to solve the remaining open cases from Table \ref{tab:resultsTable}, specifically: $(0,\beta)$-\textsc{tds} by TMW and TCW when $\beta$ is any positive constant; $(1,\beta)$-\textsc{tds} by TMW when $\beta$ is any non-positive constant; and $(1,\beta)$-\textsc{tds} by TCW, when $\beta$ is a negative constant. Additionally, as is the case with many temporal extensions of \textsc{Dominating Set}, it would be interesting to consider other variations on the definition of $(\alpha,\beta)$-\textsc{tds}. For example, changing the constraints on \emph{which} vertices must satisfy the $(\alpha,\beta)$-temporal dominating condition may lead to interesting extensions, perhaps even leading to a definition which captures \emph{both} \textsc{Temporal Dominating Set} and \textsc{Temporal Vertex Cover}. Finally, we would like to understand the behaviour of $(\alpha,\beta)$-\textsc{tds} both with respect to other parameters, and on cases with restricted underlying graphs (ie: a tree). Example parameters include \emph{vertex-interval-membership-width} \cite{BumpusMeeks2023,FamiliesTractableProblems}, or those for which \textsc{tds} is known to be in FPT, such as underlying treewidth plus maximum snapshot degree \cite{HerrmannKomMora2025}. 

\subsection*{Acknowledgements}

Jessica Enright and Kitty Meeks are supported by EPSRC grants EP/T004878/1 and EP/V032305/1. Elena Moss is supported by a Nokia-Bell Labs scholarship.

\bibliographystyle{plain}
\bibliography{ref1}
\end{document}